\begin{document}

\title{Safe Collective Control under Noisy Inputs and Competing Constraints via Non-Smooth Barrier Functions\textsuperscript{*} \thanks{\textsuperscript{*}This work received support in part from the Office of Naval Research (Grant No. N000141712622) and from a seed grant awarded by Northrop Grumman Corporation.}
\thanks{\textsuperscript{1}C. Enwerem and J.S. Baras are with the Department of Electrical \& Computer Engineering and the Institute for Systems Research, University of Maryland, College Park, MD 20742, USA.
Emails: \{\texttt{enwerem, baras}\}@umd.edu.}
}
\author{{Clinton Enwerem\textsuperscript{1}}
\and
{John S. Baras\textsuperscript{1}}}

\maketitle 

\begin{abstract}
     We consider the problem of safely coordinating ensembles of identical autonomous agents to conduct complex missions with conflicting safety requirements and under noisy control inputs. Using non-smooth control barrier functions (CBFs) and stochastic model-predictive control as springboards, and by adopting an extrinsic approach where the ensemble is treated as a unified dynamic entity, we devise a method to synthesize safety-aware control inputs for uncertain collectives. Drawing upon stochastic CBF theory and recent developments in Boolean CBF composition, our method proceeds by smoothing a Boolean-composed CBF and solving a stochastic optimization problem where each agent's forcing term is restricted to the affine subspace of control inputs certified by the combined CBF. For the smoothing step, we employ a polynomial approximation scheme, providing evidence for its advantage in generating more conservative yet sufficiently-filtered control inputs than the smoother but more aggressive equivalents produced from an approximation technique based on the log-sum-exp function. To further demonstrate the utility of the proposed method, we present an upper bound for the expected CBF approximation error, along with results from simulations of a single-integrator collective under velocity perturbations. Lastly, we compare these results with those obtained using a naive state-feedback controller lacking safety filters.
\end{abstract}

\begin{keywords}
    control barrier functions, multi-agent systems, safety-critical control, stochastic model-predictive control
\end{keywords}

\IEEEpeerreviewmaketitle 

\section{Introduction}
Intelligent cyber-physical systems --- such as teams of aerial robots or autonomous vehicle platoons --- face a fundamental challenge: they must safely navigate complex environments while efficiently completing their objectives. This challenge is non-trivial because the sub-tasks that constitute the ensemble's objective may often be varied and conflicting, making it imperative to conduct appropriate trade-offs between tasks. The inherent uncertainty in each agent's model and operational vicinity also imposes additional constraints on the multiagent system, making safe multiagent control an acute problem.
\begin{figure}[t!]\centering{\includegraphics[interpolate, width=.65\columnwidth]{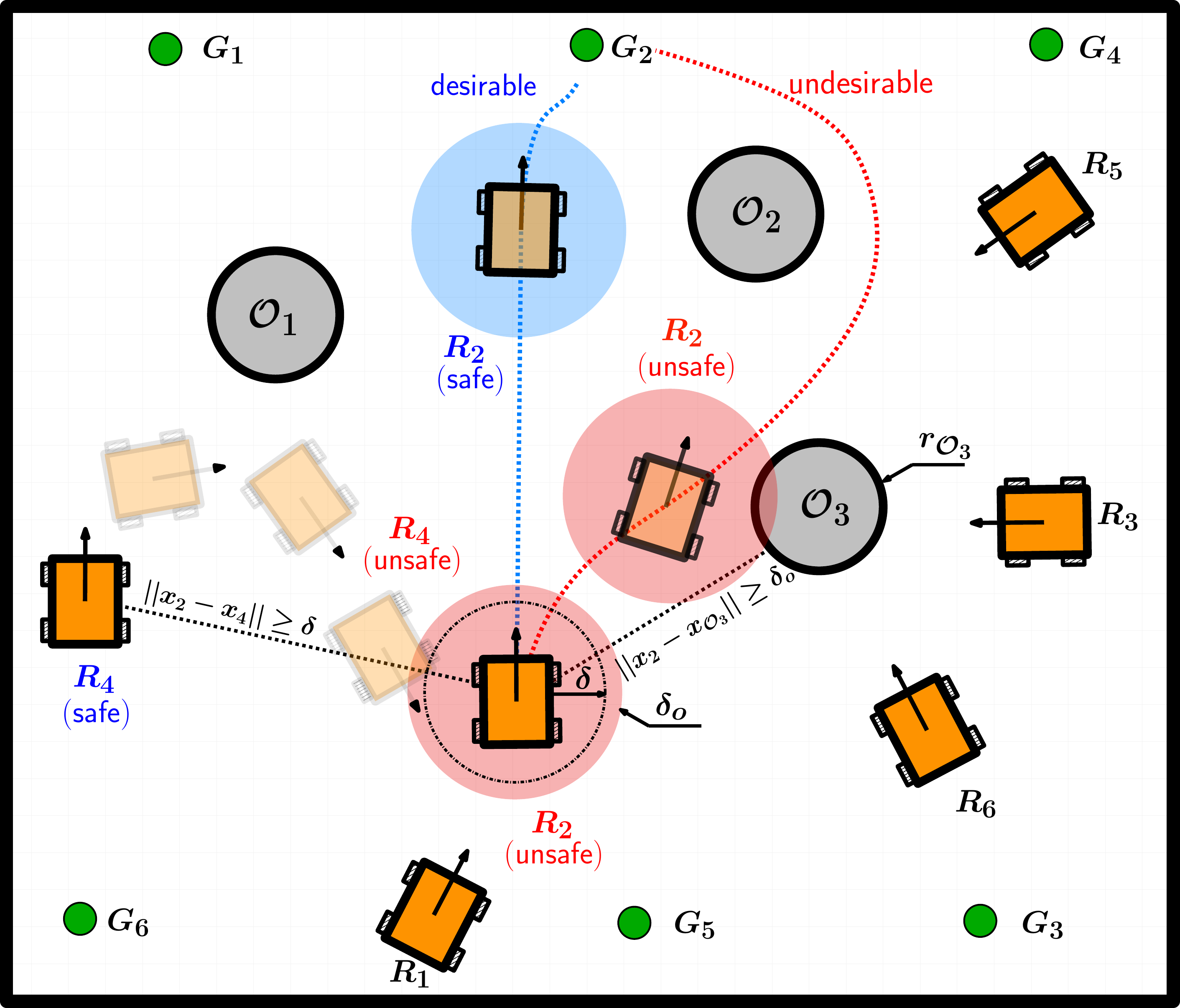}}
    \caption{\textbf{\hlrv{Motivating example}}: A multi-agent reach-avoid mission with complex safety requirements. Here, the dynamic agents (depicted as mobile robots in orange ($R_i$)) must navigate to specified goal positions (green circles ($G_i$)) while avoiding multiple obstacles (gray circles ($\mathcal{O}_i$)) and inter-agent collisions. Figure annotations describe associated notions of safety, with agent arrows indicating direction of travel.} 
    \label{fig:reachavoidmas}
\end{figure}
Control barrier functions (CBFs) --- an arguably de facto technique for safety-critical controller synthesis --- have excelled in many applications in autonomous driving, robotics \cite{glotfelterNonsmoothBarrierFunctions2017}, and allied fields. However, conventionally, most of the associated results in CBF-enabled control are founded on deterministic assumptions, making it challenging to apply them in uncertain settings. Moreover, designing CBFs for complex missions with competing safety requirements typically entails a composition of distinct barrier functions, an operation that may render the resulting function non-smooth, so that traditional results of local Lipschitz continuity fail to apply.%

In composing barrier functions, one also runs the risk of potentially losing safe controller existence guarantees. \hlrv{Again, this is due to the fact that many CBF-derived control synthesis algorithms generate filtered control signals by solving quadratic programs (QPs) --- or some optimization problem with state and control costs --- via gradient descent or other derivative-based techniques. As such, even if one could come up with CBFs that sufficiently capture all the given safety requirements at hand, there might not exist any control sequence for which the constraints of the resulting CBF quadratic program are satisfied \cite{molnarComposingControlBarrier2023}}. \hlrv{Moreover, while an existence of non-negative Lagrange multipliers for each CBF constraint has been shown to guarantee feasibility of the resulting QP \cite{molnarComposingControlBarrier2023}, such conditions are for the deterministic case and do not apply in the stochastic setting}.

\hlrv{Regarding CBF composition, a number of studies have presented techniques for capturing diverse safety requirements in the deterministic setting (see \cite{glotfelterNonsmoothBarrierFunctions2017,molnarComposingControlBarrier2023})}. These methods typically involve computing the pointwise maximum (resp. minimum) of all barrier functions that encode the safety requirements, equivalent to an OR (resp. AND) Boolean logical operation. Because the foregoing operations yield non-smooth functions, however, synthesizing safe controllers becomes a challenge, since certifying control inputs via barrier functions requires computing the latter's Lie derivatives along the control and drift vector fields. Despite these challenges, recent work  \cite{molnarComposingControlBarrier2023} has demonstrated that safe control inputs can still be synthesized by approximating the non-smooth CBF (NCBF) via the log-sum-exp (LSE) function. They further posit that the resulting CBF is bounded tightly by asymptotically-decaying terms containing the NCBF, under such a smoothing scheme. In the stochastic case and for multiagent systems, however, no such analysis or guarantees exist, to the best of our knowledge. With the recent emergence of results extending CBF theory to stochastic systems \cite{so2023almost}, we now have the tools to address the foregoing research gaps.

\subsection{Scientific Contributions}
Leveraging new-found ideas on Boolean composition of CBFs with smooth approximation, we report results on synthesizing safe controllers for ensembles under complex tasking and noisy inputs. In contrast to similar work \cite{molnarComposingControlBarrier2023}, however, we apply a polynomial smoothing function in the smoothing step and synthesize control inputs via a \textit{stochastic} quadratic program. Finally, we present two numerical examples to demonstrate the proposed technique and validate the approach through comparisons with a naive state-feedback controller and an LSE-approximated CBF-enabled control scheme. 

\subsection{Outline}
Notations that appear throughout the paper have been organized on \cref{tab:notnom}. Occasionally, several notations appearing there will contain a subscript, a superscript, or both, but they will be elucidated if unclear from the context. In the sections that follow, we briefly discuss the mathematical elements upon which our work is founded, viz. non-smooth CBFs (\cref{sec:cbfrev}), Boolean composition (\cref{sec:compncbf}), and stochastic model-predictive control (SMPC) (\cref{sec:rmpc}). These sections set the background for our theoretical results (\cref{sec:theores}), with numerical simulations (\cref{sec:numex}), associated findings (\cref{sec:resdis}), the article's conclusion (\cref{sec:conc}), and the appendices appearing last. 
\begin{table}[t]
    \centering
    \setlength\arraycolsep{0pt}
    \caption{\sc Notation \& Nomenclature}
    \label{tab:notnom}
    \begin{tabularx}{\columnwidth}{L{2.1cm}p{6cm}}
    \toprule
      Symbol & {\centering Description}\\
      \midrule
      \midrule
      $x_i \in \mathfrak{X} \subset \mathbb{R}^d$ & Node- or agent-level state variable\\
      $u_i \in \mathfrak{U} \subset \mathbb{R}^m$ & Node-level forcing term\\
      ${\zeta}_d$ & Vector $\begin{bsmallmatrix} \zeta, & \zeta, & \dots & \zeta \end{bsmallmatrix}^{\top} \in \mathbb{R}^d$\\
      $\mathbb{I}_d$ & $d\times d$ identity matrix\\
      $\lvert\lvert {x_i} \rvert\rvert_p$ & $p$-norm (set to two where $p$ is omitted)\\
      $[{A_{ij}}]$; ${\operatorname{tr}(A)}$; $A^{\top}$ & Matrix ($A$) with elements $A_{ij}$; trace; transpose\\
      $||H||_{\liabnorm}$ & $\int_{a}^{b}{|H(\tau)|}d\tau$, $H$ \hlrv{Lipschitz} continuous on $[a,b]$\\
      ${\Lambda}([\star_i]_{i\in\mathcal{I}})$ & Block diagonal matrix with $\ith$ block $\star_i$\\
      $\mathbb{R}_+$; $\mathbb{Z}_{\ge 0}$, $\mathbb{Z}_{+}$ & Set of positive reals; non-negative \& positive integers\\
      $C^k$ & Set of $k$-times continuously-differentiable functions\\
      $f(\cdot)$; $g(\cdot)$ & Drift and control vector fields\\
      $\otimes$; $\alpha(\cdot)$; & Kronecker product; Extended $\kappa$-class function\\
      $h(\cdot)$ & Control barrier function (CBF)\\
      $L_fh(\cdot)$; $L_gh(\cdot)$; & Lie derivatives of $h$ along $f$ and $g$\\
      $\mathbb{E}$; $\operatorname{Pr}(E_i)$ & Expectation operator; Probability of event $E_i$\\
    \bottomrule
    \end{tabularx}
\end{table}

\section{Non-Smooth Control Barrier Functions}
\label{sec:cbfrev}
To preface our discussion on NCBFs, we consider a collective comprising $N$ identical agents, each evolving according to nonlinear control-affine dynamics with noisy inputs given by the following equation (with subscript $i$ denoting the $\ith$ agent): 
\begin{align}
 \label{eq:contaffss}
        \dot{x}_i = f(x_i) + g(x_i)u_i + K_w(x_i)w_i,
 \end{align}
 with admissible control signals taking values in the set, $\mathfrak{U}$, of piecewise continuous and absolutely-integrable functions. $w_i\in\mbr^d$ represents the realization of an unknown disturbance (with a known probability distribution) that is scaled by a state-dependent and positive-definite gain matrix, $K_w\in\mbr^{d\times d}$. \hlrv{The equation in \cref{eq:contaffss} coincides with a stochastic differential equation comprising a \textit{diffusion} term equal to $K_w(x_i)$} and an affine-in-control \textit{drift} term. \hlrv{A few useful definitions follow}.

\begin{definition}[Safe Set]
    \hlrv{Let $x_i$ satisfy \cref{eq:contaffss}}. The safe set, $\mf{C}$, with non-empty interior, $\operatorname{Int}(\mf{C})$, and boundary, $\partial\mf{C}$, is the zero super-level set of the sufficiently smooth map, $h(\cdot): \mf{C} \rightarrow \mbr,$ that is, $\mf{C} = \{x_i \in \hlrv{\mathfrak{X}} : h(x_i) \ge 0\}$.
\end{definition}

\begin{definition}[Control Barrier Function \cite{ames_control_2019}]
\label{def:cbf}%
\noindent 
Assume agent-level dynamics of the form \cref{eq:contaffss}, with $f$ and $g$ locally Lipschitz, and suppose there exists a function, $\alpha$, belonging to the family of locally Lipschitz continuous extended $\mck$-class functions\footnote{An extended $\mck$-class function is a strictly increasing function that vanishes at the origin.} on $\mbr^d$. The $C^2$ function, $h: \mathfrak{C}\subset \hlrv{\mathfrak{X}} \rightarrow \mbr_+\cup\{0\}$, is a control barrier function\footnote{To be precise, $h$ as delineated in \cref{def:cbf} refers to the special class of CBFs known as zeroing or \textit{zero control} CBFs \cite{ames_control_2017}, so named for their property of vanishing at the boundary of the safe set, i.e., $\partial\mathfrak{C}$. \hlrv{They are distinct from \textit{reciprocal} CBFs (RCBFs) that instead grow unbounded as the system state approaches $\partial\mathfrak{C}$}.} for the system in \cref{eq:contaffss}  if there is a $u_i$ for which the following property holds for all $x_i \in \hlrv{\mathfrak{X}}$ satisfying $h(x_i) > 0$, with $\gamma > 0$:
\begin{equation}
\label{eq:cbfppty}\hlrv{
 L_fh + L_ghu_i + \frac{1}{2}{\operatorname{tr}\left(K_w^{\top}\frac{\partial^2h}{\partial x_i^2}K_w\right)} \ge -\alpha(\gamma, {h(x_i)})}.
\end{equation}
\end{definition}
\normalsize
The family of controllers satisfying the foregoing condition are important because they establish the forward invariance of $\mf{C}$, for $\mf{C}$ compact for all $t \ge 0$. \hlrv{In particular, under certain conditions on the drift and diffusion terms corresponding to $dh/dt$ (the exact derivative of $h$), then $\hlrv{\operatorname{Pr}}(x_i(t) \in \mf{C} \ \forall \ t \geq 0) = 1$, provided that $x_i(0) \in \operatorname{Int}(\mathfrak{C})$ (see the recent paper \cite{so2023almost}, \cref{thm:soalmostsureconv}, and \cref{prop:finvcsigma} with its proof for the details)}. This result thus enables the synthesis of safe controllers via stochastic quadratic programming, set here as the expectation minimization problem
\begin{align}
\label{eq:cbfqp}
&\quad \min_{u_{i} \in \mathfrak{U}} J = \E{\sum_{\tau = 0}^{T-1}{c\left(x_i(\tau), u_i(\tau)\right)}}\\
& \text{s.t.} \ \hlrv{\operatorname{Pr}}\bigg(L_fh + L_ghu_i + \frac{1}{2}{\operatorname{tr}\left(K_w\frac{\partial^2h}{\partial x_i^2}K_w\right)} \ge -\alpha(\gamma, {h})\bigg)\nonumber\\
&\quad \ge 1-\delta_h\nonumber,
\end{align}
for $\delta_h \in (0, 1)$ small. The symbol $T \in \mathbb{Z}_+$ represents some time horizon, and $c(x_i(t), u_i(t))$ is a positive-definite cost function (typically made up of quadratic terms in $u_i$ and $x_i$) chosen to capture a desired weighting of the control inputs corresponding to each component of $x_i$. With this primary setting in place, we transition to discussing Boolean composition of CBFs.

\section{Composing CBFs via Boolean Logic}
\label{sec:compncbf}
Composing barrier certificates for complex safety requirements via Boolean logical operations appears in several studies in the literature \cite{molnarComposingControlBarrier2023,glotfelterNonsmoothBarrierFunctions2017,ahmadi_safe_2019}. Besides Boolean logic, however, it is theoretically possible to enforce global safety requirements through distinct CBFs by passing them as independent constraints to a QP. Nevertheless, non-smooth compositions --- comprising some combination of maximization, minimization, or negation --- are standard, for reasons already alluded to. These operations are respectively analogous to the Boolean logical operators of OR $(\vee)$, AND $(\wedge)$, and NOT $(\neg)$, and the set-theoretic operations of union $(\bigcup)$, intersection $(\bigcap)$, and complement $({}^c)$. More concretely, suppose we have a finite number of independent safety constraints completely indexed by the set, $\mathcal{I}$. We are interested in computing a common CBF with attendant control inputs that drive each agent in the ensemble only to configurations that respect all $|\mathcal{I}|$ safety constraints. Denote by \hlrv{$\mrmx = [[x_i^{\top}]_{i=1}^N]^{\top} \in \mathbb{R}^{Nd}$ and $\mathrm{u} = [[u_i^{\top}]_{i=1}^N]^{\top} \in \mathbb{R}^{Nm}$} the ensemble analogs of the state and control vectors. The ensemble-level safe set, hereafter denoted as $\mathfrak{C}_\Sigma$, with corresponding (zeroing) barrier function, $h_\Sigma$, is the set 
\begingroup
\begin{align}
\label{eq:cstar} 
\mathfrak{C}_\Sigma &= \{\mrmx \in \mathbb{R}^{Nd}: h_\Sigma(\mrmx) \ge 0\}\\
\label{eq:cstarb} 
\quad &= \bigcap_{|\mathcal{I}|} \mf{C}_\star, \ \text{for} \ \star = 1, 2, \dots, |\mathcal{I}|\\
\quad &= \left\{\mrmx \in \mathbb{R}^{Nd}: \min_{\star \in \mathcal{I}}{h_\star(\mrmx)} \ge 0\right\},  
\end{align}
\endgroup
where $\hsig$ is a conjunction of all independent CBFs, over $\mathcal{I}$, given by the expression:
\begin{equation}
    \label{eq:compcbfs}
    h_\Sigma(\mrmx) = \bigwedge_{\star=1}^{|\mathcal{I}|}h_\star(\mrmx).
\end{equation}
Similarly, one can construct the OR analog of \cref{eq:cstarb} (resp. \cref{eq:compcbfs}) through a union (resp. disjunction) over $\mathcal{I}$. Even more complicated safety requirements can be captured via successive Boolean logical compositions \cite{glotfelterNonsmoothBarrierFunctions2017,molnarComposingControlBarrier2023}, as we illustrate in \cref{sec:masf}. 

\section{Approximating NCBFs via Polynomial Smoothing}
Since $h_\Sigma$ is non-smooth and, thus, unfit for quadratic optimization, in this section, we discuss a parametric polynomial smoothing technique for finding an approximation of $h_\Sigma$, hereafter denoted as $\hsighat$. Rewriting $h_\Sigma = \minhstar$ (with $\mrmx$ suppressed) as
\begin{equation}
    \label{eq:hsigalt}
    h_\Sigma(\mrmx) = -\frac{1}{2}{\big(\ell\phi(\ell) - \ell^\prime\big)},
\end{equation}%
where the expression on the right-hand side of \cref{eq:hsigalt} follows from a known equivalent expression for the min. function \cite{sahiner2018smoothing}, with $\ell =  \hdiff$ and $\ell^\prime = \sum_{\star\in\mc{I}}h_\star(\mrmx)$. The function $\phi: \mathbb{R} \rightarrow \{-1, 1\}$ characterizes the discontinuity of $h_\Sigma$ and is given by%
\begin{align}
    \label{eq:phi}
    \phi(\ell) = \begin{cases}
        1, \ & \ell \ge 0\\
        -1, \ & \ell < 0.
    \end{cases}
\end{align}
Smoothing $h_\Sigma$ thus reduces to the task of redefining \cref{eq:phi} to address the jump discontinuity at $\ell = 0$. In particular, if we redefine $\phi$ (hereafter denoted as $\Hat{\phi}(\ell, \beta)$ to make its distinction from \cref{eq:phi} clear) as
\begingroup
\small
\begin{align}
        \label{eq:phihat}
        \Hat{\phi}(\ell, \beta) = \begin{cases}
        1, \ &\ell > \beta\\
        M_k(\ell, \beta), \ &-\beta \le \ell \le \beta\\
        -1, \ &\ell < -\beta,
    \end{cases}
\end{align}
\endgroup
where $\beta$ is the smoothing parameter and $M_k(\ell, \beta) = a_0(\beta) + a_1(\beta)\ell + a_2(\beta)\ell^2 + \dots + a_{p}(\beta)\ell^{p}$, we arrive at the following expression for $\Hat{h}_\Sigma(\mrmx)$: 
\begin{equation}
    \label{eq:hsigaltfin}
    \Hat{h}_\Sigma(\mrmx) = -\frac{1}{2}{\big(\ell\Hat{\phi}(\ell, \beta) - \ell^\prime\big)}.
\end{equation}
The choice of $p$ in $M_k$ depends on the desired level of continuous differentiability for $\hsighat$, i.e., $p=2k+1, k \in \mathbb{Z}_{+}$, for $\hsighat \in C^k$. Denoting the ensemble drift and control vector fields respectively as $F$ and $G$, we can write the corresponding ensemble Lie derivatives (omitting higher derivatives for brevity) as
\hlrv{
$$\begin{array}{ll}
    \label{eq:liehsigf}
    L_F\Hat{h}_\Sigma(\mrmx) = \frac{\partial\hsighat}{\partial \mrmx}F(\mrmx) \ \text{and} \
    L_G\Hat{h}_\Sigma(\mrmx) = \frac{\partial\hsighat}{\partial \mrmx}G(\mrmx),
\end{array}$$
where $F(\mrmx)$ and $G(\mrmx)$ are given respectively by
$$
\begin{array}{ll}
&\begin{bmatrix}f^{\top}({x}_1), & f^{\top}({x}_2), & \dots, & f^{\top}({x}_{N})\end{bmatrix}^{\top} \in \mathbb{R}^{{N}d} \ \text{and} \ \\ 
&\bigwedge\big(\begin{bmatrix}g({x}_1),& g({x}_2),& \dots,& g({x}_{N})\end{bmatrix}\big) \in \mathbb{R}^{{N}d \times {N}m},
\end{array}
$$} 
with
\begingroup
\small
\begin{align}
    \label{eq:hsighatdot}
    \frac{\partial \hsighat}{\partial \mrmx} &= -\frac{1}{2}\Bigg[\bigg(\hdiffdot\bigg)\phihat\\
    &+ \Bigg(\hdiff\Bigg)\phihatbetadot - \hsumdot \Bigg].\nonumber
\end{align}
\endgroup
\normalsize

\section{SMPC with Polynomial-Approximated NCBFs}
\label{sec:rmpc}
\normalsize
\enlargethispage{\baselineskip}
We segue now to generating control commands for each agent via SMPC. In particular, having constructed a valid and smooth CBF from the given independent safety constraints, we can now synthesize control inputs by solving the (ensemble-level) stochastic quadratic program
\begingroup
    \begin{subequations}
    \label{eq:cbfqprob}
   \begin{align}
    &{\min_{\mathrm{u} \in \mathbb{R}^{Nm}}} \mathbb{E}\bigg[{\sum_{\tau = 0}^{\tcont-1}||\mathrm{u}(\tau) - \mathrm{u}_{d}(\tau)||^2}\cdot \big(1-\mathds{1}_{\{||\mrmx - \mrmx_g|| \le \varepsilon_g\}}\big)\bigg]\nonumber\\
    \label{eq:chancconst}
    &\text{s.t.} \quad \hlrv{\operatorname{Pr}}\big(L_F\Hat{h}_\Sigma(\mrmx) + L_G\Hat{h}_\Sigma(\mrmx)\mathrm{u}(\tau) \nonumber\\
    & \quad + \frac{1}{2}{\hlrv{\operatorname{tr}}\big((K_w\otimes I_N)^{\top}\frac{\partial^2\hsighat(\mrmx)}{\partial \mrmx^2}(K_w\otimes I_N)\big)} \nonumber\\
    & \quad + \alpha(\gamma, {\hsighat(\mrmx)}) \ge 0\big)\ge1-\delta_h\\
    \label{eq:ubounds}
    & \quad -\umax\cdot1_{Nm} \le \mrm{u}(\tau) \le \umax\cdot1_{Nm},  
    \end{align}
    \end{subequations}
\endgroup
with $\mathrm{u}_{d} \in \mathbb{R}^{Nm}$ denoting the nominal and unsafe ensemble-level control input, and where $\tcont$ is the MPC control horizon. We include an indicator function, $\mathds{1}_{\{||\mrmx - \mrmx_g|| \le \varepsilon_g\}}$, to prevent the accumulation of costs after all agents have reached positions within a goal region defined by the closed ball centered at $\mrmx_g$ with radius, $\varepsilon_g$. The agents' respective goal locations are given by each element of the goal vector, $\mrmx_g = [x_{i,g}^{\top}]^{\top}$. Finally, we also place bounds on the control inputs to respect physical actuation limits \cref{eq:ubounds}, characterized by a positive scalar, $\umax$. Due to the hard input constraints and chance constraints, SMPC problems of the form \cref{eq:cbfqprob} are generally intractable \cite{mesbahStochasticModelPredictive2016}. \hlrv{Thus}, a few assumptions (presented next) are in order, to guarantee the problem's feasibility.
\begin{assumptions}[]
\begin{enumerate}[start=1,label={(A\arabic*)},leftmargin=*]
\setlength{\itemsep}{-3pt}
    \item[]
    \item \label{asm:fullstate}
    The dynamical system for the $\ith$ agent is fully observable, so that access to full state information is assumed.
    \item The disturbance vector is zero-mean Gaussian with bounded covariance, i.e., $w_i\sim \mathcal{N}(0_d, [\Sigma_{w_{ij}}])$, with $0 < \Sigma_{w_{ii}} \le \sigma_{\mathrm{max}} \ \forall \ i \in [1, 2, \dots, d]$, where $\sigma_{\mathrm{max}} \in \mbr_+$ represents a bound on the noise. This assumption is important to argue for the convexity of the set of safe controls using the log-concavity of the multivariate normal distribution.
    \item $\mathrm{u}_{d}$ is parameterized by the disturbance as $\mathrm{u}_d(t) = \mru_{f}(t) + (K_w\otimes I_N)\mrm{w}$, with $\mrm{w} = [[w_i^{\top}]_{i=1}^N]^{\top} \in \mathbb{R}^{Nd}$ denoting the ensemble disturbance vector and where $\mru_f$ is a simple feedback law (see \cref{sec:masf} for its form).
\end{enumerate}
\end{assumptions}
These assumptions together with a restriction of the control inputs to an invariant set \cref{eq:chancconst} and the quadratic cost guarantee the feasibility of \cref{eq:cbfqprob}  \cite{mesbahStochasticModelPredictive2016}, since they enable a transformation of the chance constraints into convex constraints via a \hlrv{logarithm} operation. \hlrv{One can then solve the resulting QP using optimization software with quadratic programming support.}  

\section{Theoretical Results}
\label{sec:theores}
\hlrv{\subsection{Upper Bound on the Expected CBF Approximation Error}}\label{ssec:cbferrbound}
To quantify the accuracy of the CBF approximation with respect to the smoothing parameter, we introduce the following lemma, which we will invoke in the proposition to follow.
\begin{lemma}
    \label{lem:phihat}
    Let the functions $\phihat$ and $\phi$ be as defined in \cref{eq:phihat} and \cref{eq:phi}, respectively. Then, for $\beta > 0$ and on ${\mathcal{I}_{\beta}} = [-\beta, \beta]$, the following inequality holds:
    \begin{equation}
        \label{eq:phihatminusphi}
        \E{|| \phi(\ell) - \phihat||_{\liabnorm}} = \frac{15}{4}\beta.
    \end{equation}
\end{lemma}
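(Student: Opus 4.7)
The crucial observation is that $\phi(\ell)$ and $\phihat$ coincide outside $\mathcal{I}_\beta$: both equal $+1$ for $\ell > \beta$ and $-1$ for $\ell < -\beta$. The integrand $|\phi(\ell) - \phihat|$ therefore vanishes on $\mbr\setminus\mathcal{I}_\beta$, so the $\liabnorm$ norm collapses to the plain integral $\int_{-\beta}^{\beta}|\phi(\ell) - M_k(\ell,\beta)|\,d\ell$ for any $[a,b]\supseteq \mathcal{I}_\beta$. Since $\phi$ and $\phihat$ are deterministic scalar functions of their argument, the outer expectation passes through trivially, and the lemma reduces to a single one-dimensional calculation.

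The next step is to split the domain at $\ell = 0$ to absorb the jump in $\phi$. On $[0,\beta]$, $\phi = 1$ and the integrand becomes $|1 - M_k(\ell,\beta)|$; on $[-\beta, 0)$, $\phi = -1$, giving $|1 + M_k(\ell,\beta)|$. The Hermite matching conditions $M_k(\pm\beta,\beta) = \pm 1$ and $\partial^{j}_\ell M_k(\pm\beta,\beta)=0$ for $j = 1,\dots,k$ uniquely determine $M_k$ as an odd polynomial of degree $2k+1$ in $\ell$. The resulting antisymmetry $M_k(-\ell,\beta) = -M_k(\ell,\beta)$ forces the two half-integrals to coincide, reducing the problem to $2\int_0^\beta (1 - M_k(\ell,\beta))\,d\ell$; the absolute value can be discarded because $M_k$ rises monotonically from $0$ to $1$ on $[0,\beta]$, so $1 - M_k \ge 0$ throughout.

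The remaining step is to substitute the explicit closed form for $M_k$ --- writable as $M_k(\ell,\beta) = \sum_{j=0}^{k}c_j\,(\ell/\beta)^{2j+1}$ with constants $c_j$ obtained by solving the linear Hermite system --- and integrate term by term via $\int_0^\beta \ell^{2j+1}\,d\ell = \beta^{2j+2}/(2j+2)$. Each surviving summand is proportional to $\beta$, so the final value is a linear multiple of $\beta$ whose coefficient depends only on $k$; for the smoothing order adopted in the paper, tallying the contributions should collapse the expression to $\tfrac{15}{4}\beta$.

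The principal obstacle is purely algebraic: solving the Hermite linear system and then marshalling the resulting polynomial coefficients without errors. A minor subtlety is verifying monotonicity of $M_k$ on $[0,\beta]$ so that the absolute value is legitimately removed; this can be done once and for all by showing that $M_k'(\ell,\beta) \ge 0$ on the interval from the interpolation conditions $M_k'(\pm\beta,\beta) = 0$ and the sign structure of the coefficients. Once those technicalities are discharged, the claim follows by direct integration, and the expectation operator plays no substantive role beyond accommodating any probabilistic structure that might later be imposed on $\ell$ through the stochastic state $\mrmx$.
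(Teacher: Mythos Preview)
Your plan is correct and follows essentially the same route as the paper: observe that $\phi$ and $\phihat$ coincide outside $\mathcal{I}_\beta$, note that the expectation is vacuous because the $L^1$-norm is a deterministic constant, and then carry out a direct integration on $[-\beta,\beta]$ using the explicit $k=2$ coefficients ($a_1=\tfrac{15}{8\beta}$, $a_3=-\tfrac{5}{4\beta^3}$, $a_5=\tfrac{3}{8\beta^5}$) listed in \cref{tab:params}. Your odd-symmetry split and monotonicity check are just the details behind the paper's phrase ``straightforward integration''; the only overhead you add is proposing to re-derive the coefficients from the Hermite conditions, which is unnecessary since the paper supplies them directly.
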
%
\begin{proof}
    Since $\phihat$ and $\phi(\ell)$ coincide for $\ell < -\beta$ and $\ell > \beta$, when $\beta > 0$, and the $\liabnorm$ norm is equivalent to the integral, $\int_{-\beta}^{\beta}{| \phi(\ell) -  \phihat|} d\ell$, on ${\mathcal{I}_{\beta}}$, it is sufficient to prove that \cref{eq:phihatminusphi} holds on ${\mathcal{I}_{\beta}}$. By straightforward integration on ${\mathcal{I}_{\beta}}$, with $k = 2$ in \cref{eq:phihat} (see the corresponding coefficients for $M_2(\ell, \beta)$ on \cref{tab:params}), and using the definition of the $\liabnorm$ norm and the fact that $\E{c} = c$, if $c$ is a constant, we arrive at the expression in \cref{eq:phihatminusphi}.
\end{proof}
\begin{proposition}
    Let $\hsighat$ be defined as in \cref{eq:hsigaltfin}. Define the CBF approximation error as $\hat{e} = \hsighat - \minhstar$. Then, under the polynomial smoothing scheme, with $k = 2$ in \cref{eq:phihat}, we have the following \hlrv{inequality} for the expected CBF approximation error, $\E{||\hat{e}||_{\liabnorm}}$:
    \begin{equation}
        \label{eq:upbound}
        \E{||\hat{e}||_{\liabnorm}} \le \frac{15\beta^2}{8}.
    \end{equation}
\end{proposition}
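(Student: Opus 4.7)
The plan is to reduce the error bound to an application of \cref{lem:phihat} by first simplifying the error expression algebraically and then exploiting the fact that $\ell$ is bounded on the interval where $\phi$ and $\Hat{\phi}$ disagree.

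First, I would substitute the definition \cref{eq:hsigaltfin} of $\hsighat$ and the equivalent expression \cref{eq:hsigalt} for $h_\Sigma = \min_{\star\in\mathcal{I}}{h_\star}$ into the definition of $\hat{e}$. The $\ell'$ terms cancel, yielding the clean identity
\begin{equation*}
    \hat{e}(\mrmx) = \tfrac{1}{2}\,\ell\bigl(\phi(\ell) - \Hat{\phi}(\ell,\beta)\bigr),
\end{equation*}
with $\ell = \hdiff$ as in the body of the paper. This is the key structural observation: the error factors into a product of $\ell$ and the scalar discontinuity gap.

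Second, I would observe that by definition of $\Hat{\phi}$ in \cref{eq:phihat}, the integrand $\phi(\ell) - \Hat{\phi}(\ell,\beta)$ vanishes whenever $|\ell| > \beta$. Hence the $\liabnorm$ norm of $\hat{e}$ collapses to an integral over the subdomain on which $\ell(\mrmx) \in \mathcal{I}_\beta = [-\beta, \beta]$. On that subdomain one has the uniform bound $|\ell| \le \beta$, so pulling this bound out of the integral gives
\begin{equation*}
    ||\hat{e}||_{\liabnorm} \le \tfrac{\beta}{2}\,||\phi(\ell) - \Hat{\phi}(\ell,\beta)||_{\liabnorm}.
\end{equation*}

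Third, I would take expectations on both sides and apply \cref{lem:phihat} (with $k = 2$), which identifies the expected $\liabnorm$ norm of $\phi - \Hat{\phi}$ with $\tfrac{15}{4}\beta$. Multiplication by the prefactor $\tfrac{\beta}{2}$ then produces the claimed bound $\tfrac{15\beta^2}{8}$, completing the argument.

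The proof is largely routine once the factorization of $\hat{e}$ is recognized; the only mild subtlety is justifying that the $|\ell|\le\beta$ bound can be used pointwise before taking expectation, which is immediate since the inequality holds deterministically on the support of the integrand. I do not foresee any substantive obstacle beyond bookkeeping, and no measurability or integrability issue arises because $\Hat{\phi}$ is a bounded polynomial on $\mathcal{I}_\beta$ and $\phi$ is bounded everywhere.
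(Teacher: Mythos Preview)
Your proposal is correct and follows essentially the same route as the paper's proof: factor $\hat{e}=\tfrac{1}{2}\ell(\phi-\Hat{\phi})$, restrict to $\mathcal{I}_\beta$ where the integrand is nonzero, bound $|\ell|\le\beta$ there, and invoke \cref{lem:phihat}. The only cosmetic difference is that the paper carries the expectation through the chain of equalities from the outset, whereas you bound deterministically first and then take expectation; the substance is identical.
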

\begin{proof}
    See Appendix A.
\end{proof}
\begin{remark}
Since the upper bound on the CBF approximation error consists of terms quadratic in $\beta$, successively higher values of $\beta$ may lead to smoother approximations under the polynomial smoothing scheme at the cost of losing information about the original function. Conversely, the approximations for $\beta$ values on $(0, \frac{2\sqrt{2}}{\sqrt{15}})$ may be comparatively less smooth while being less prone to distortion. It turns out, in fact, that no perceptible change in the smoothing performance is recorded for $\beta$ values on $(0, 1]$ (see \cref{fig:betaeffect}), with safe controls assured nonetheless (see \cref{fig:usafe}). 
\end{remark}
\hlrv{\subsection{Forward Invariance of the Ensemble-Level Safe Set}}\label{ssec:hsigvalid}
\hlrv{Considering that our method relies on the synthesis of control inputs via a polynomial-smoothed NCBF (i.e., $\hsighat$), that itself defines the ensemble-level safe set ($\mf{C}_\Sigma$), we need to verify that $\hsighat$ is indeed a valid CBF. The following result from \cite{so2023almost} specifies conditions that guarantee almost-sure safety via stochastic \textit{zeroing} CBFs that will be useful in establishing the validity of $\hsighat$ in our next result (Proposition 2).}%
\noindent%
\hlrv{
\begin{theorem}[Conditions for Almost-Sure Safety via Zeroing CBFs (Corollary 11 in \cite{so2023almost})]
    \label{thm:soalmostsureconv}%
    Assume there exists a function $h: \mathfrak{X} \rightarrow \mathbb{R}$ and a $\mck$-class function, $\alpha_\star$, where, for all $x_i \in \mathfrak{X}$ satisfying $h(x_i)>0$, there exists a control $u_i \in \mathfrak{U}$ such that\footnote{Note that $\alpha_\star$ ($\mck$-class function) and $\alpha$ (extended $\mck$-class function), while similar, denote separate objects throughout this article.}
    \begin{equation}
    \label{eq:soalmostsureconv}
    {\tilde{\mu}}_i-\frac{{\tilde{\sigma}_i}^2}{h(x_i)} \geq-h^2(x_i) \alpha_\star(h),
    \end{equation}
    with ${\tilde{\mu}}_i$ and ${\tilde{\sigma}}_i$, respectively denoting the drift and diffusion terms corresponding to $dh/dt$, with $x_i$ evolving according to \cref{eq:contaffss}. Then, for all $t \geq 0, \operatorname{Pr}\left(x(t) \in \mathfrak{C}\right)=1$, provided that $x(0) \in \operatorname{Int}(\mathfrak{C})$.
\end{theorem}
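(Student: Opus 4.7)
Since the result is quoted verbatim as Corollary~11 of \cite{so2023almost}, the natural plan is to reconstruct the main stochastic-analytic argument used there. The strategy is to pass to a reciprocal transformation of $h$: on $\operatorname{Int}(\mathfrak{C})$, define $B(x) := 1/h(x)$, which diverges as $x$ approaches $\partial\mathfrak{C}$. Almost-sure forward invariance of $\mathfrak{C}$ then reduces to showing that the positive scalar process $B(x(t))$ does not explode in finite time almost surely, given the finite initial value $B(x(0)) = 1/h(x(0))$ guaranteed by the hypothesis $x(0) \in \operatorname{Int}(\mathfrak{C})$.

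First, I would apply It\^o's lemma to $B(x(t))$ along trajectories of \cref{eq:contaffss} driven by the admissible $u_i$ supplied by hypothesis. Writing $dh = \tilde{\mu}_i\,dt + \tilde{\sigma}_i\,dW$ and using $d(1/h) = -h^{-2}\,dh + h^{-3}(dh)^2$, the infinitesimal generator reduces to $\mathcal{L}B = -h^{-2}\bigl(\tilde{\mu}_i - \tilde{\sigma}_i^2/h\bigr)$. The postulated inequality $\tilde{\mu}_i - \tilde{\sigma}_i^2/h \ge -h^{2}\alpha_\star(h)$ immediately yields the pointwise drift bound $\mathcal{L}B(x(t)) \le \alpha_\star(h(x(t)))$. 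Since $\alpha_\star$ is $\mathcal{K}$-class (hence continuous and increasing with $\alpha_\star(0) = 0$) and $h$ is bounded on the compact set $\mathfrak{C}$, the drift is uniformly bounded by the constant $C := \alpha_\star(\sup_{x \in \mathfrak{C}} h(x))$.

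Next, I would introduce localizing stopping times $\tau_n := \inf\{t \ge 0 : B(x(t)) \ge n\}$, on which the coefficients of $dB$ remain bounded, and invoke Dynkin's formula on $[0, \tau_n \wedge T]$ to obtain $\mathbb{E}[B(x(\tau_n \wedge T))] \le B(x(0)) + CT$ for every $T > 0$. Since $B(x(\tau_n)) = n$ on $\{\tau_n \le T\}$, Markov's inequality gives $n\,\operatorname{Pr}(\tau_n \le T) \le B(x(0)) + CT$, and letting $n \to \infty$ forces $\operatorname{Pr}(\tau_n \le T) \to 0$. The resulting non-explosion statement $\operatorname{Pr}(\tau_\infty \le T) = 0$ for every $T > 0$ then implies $h(x(t)) > 0$ almost surely for all $t \ge 0$, establishing the desired identity $\operatorname{Pr}(x(t) \in \mathfrak{C}\ \forall\, t \ge 0) = 1$.

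The hard part will be the rigorous justification of It\^o's lemma and Dynkin's formula near $\partial\mathfrak{C}$, where the coefficients of $dB$ are singular. Localization by $\tau_n$ circumvents this on each pre-explosion interval, but passing to the limit $n \to \infty$ relies crucially on the $\mathcal{K}$-class structure of $\alpha_\star$ combined with compactness of $\mathfrak{C}$; absent either assumption, the drift bound $\alpha_\star(h)$ could fail to be uniformly integrable, and the non-explosion argument would need to be replaced by a Feller-type test on the radial speed of $B$.
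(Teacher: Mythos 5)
Note first that the paper does not prove this statement at all: it is imported verbatim as Corollary~11 of \cite{so2023almost}, and the only related argument in the paper is the proof of \cref{prop:finvcsigma} in Appendix~B, which merely performs the reciprocal reduction $\Bsighat = 1/\hsighat$ for the specific composed CBF and then defers the almost-sure-safety conclusion to the cited RCBF theorem (Theorem~10 of \cite{so2023almost}). Your proposal goes one level deeper and reconstructs the stochastic-analysis core that the paper delegates to the citation, and it does so along the same pivot: the transformation $B = 1/h$, the It\^o identity $\mathcal{L}B = -h^{-2}\bigl(\tilde{\mu}_i - \tilde{\sigma}_i^2/h\bigr)$ (which matches the drift expression the paper writes in \cref{eq:rcbfinter}), the resulting bound $\mathcal{L}B \le \alpha_\star(h)$ from \cref{eq:soalmostsureconv}, and a Khasminskii-type non-explosion argument via localization, Dynkin's formula, and the bound $n\operatorname{Pr}(\tau_n \le T) \le B(x(0)) + CT$. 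This is a correct and essentially complete route to the result, and it buys the reader an actual proof where the paper offers only a pointer. Two caveats are worth making explicit. First, the uniform drift constant $C = \alpha_\star(\sup_{x\in\mathfrak{C}} h(x))$ requires $h$ to be bounded on $\mathfrak{C}$; the theorem as stated does not assume compactness of $\mathfrak{C}$ (only the surrounding discussion does), and without it you must localize in space as well as in $B$, as you note. Second, the hypothesis supplies only a \emph{pointwise} existence of $u_i$ for each $x_i$; to apply It\^o's lemma and Dynkin's formula to $B(x(t))$ you need a measurable (in practice locally Lipschitz) feedback selection $u(x)$ under which \cref{eq:contaffss} admits a strong solution up to each $\tau_n$ --- this selection issue is standard in CBF theory but should be acknowledged rather than elided. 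A minor point of terminology: the inequality $n\operatorname{Pr}(\tau_n \le T) \le \mathbb{E}\bigl[B(x(\tau_n \wedge T))\bigr]$ follows from nonnegativity of $B$ and restriction to the event $\{\tau_n \le T\}$, not from Markov's inequality, and the possible failure mode you describe at the end is a failure of the uniform drift bound rather than of uniform integrability.
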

\begin{proposition}[Forward Invariance of $\mf{C}_\Sigma$]\label{prop:finvcsigma}
    Provided that $\mrmx(0)\in\mathfrak{C}_\Sigma$, if there exists a $\mrm{u}:[0, \infty)\rightarrow\mf{U}$ satisfying the ensemble analog of \cref{eq:soalmostsureconv} (given in Appendix B), then $\mrmx(t)\in\mathfrak{C}_\Sigma\ \forall \ t \geq 0$, with probability 1. 
\end{proposition}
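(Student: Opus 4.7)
The plan is to reduce Proposition 2 to a direct application of \cref{thm:soalmostsureconv} at the ensemble level, with $\hsighat$ playing the role of $h$ and $\mrmx$ playing the role of $x_i$. First, I would lift the agent-wise dynamics \cref{eq:contaffss} to the ensemble by stacking: $d\mrmx = F(\mrmx)dt + G(\mrmx)\mathrm{u} dt + (K_w\otimes I_N)d\mrm{w}$, where $F$, $G$ are as defined just before \cref{eq:hsighatdot}. This puts the collective in precisely the control-affine stochastic form required by the theorem.

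Next, I would verify that $\hsighat$ is admissible for \cref{thm:soalmostsureconv}. The key property to check is that $\hsighat \in C^2$ on all of $\mathbb{R}^{Nd}$, so that Itô's formula can be applied to $d\hsighat/dt$. This follows from the construction in \cref{eq:hsigaltfin}--\cref{eq:phihat}: choosing $k=2$ (equivalently $p = 2k+1 = 5$) makes $\Hat{\phi}(\ell,\beta)$ twice continuously differentiable, with the coefficients $a_0,\dots,a_5$ of $M_2$ pinned precisely by the $C^2$ matching conditions at the knots $\ell = \pm\beta$. Because $\ell$ and $\ell^\prime$ are themselves $C^2$ combinations of the individual $h_\star$ (each of which is $C^2$ by \cref{def:cbf}), the composition $\hsighat = -\tfrac{1}{2}(\ell\Hat{\phi}(\ell,\beta) - \ell^\prime)$ inherits $C^2$ regularity.

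Given admissibility, I would then apply Itô's formula to $\hsighat(\mrmx(t))$ to identify the drift and diffusion corresponding to $d\hsighat/dt$:
\begin{align*}
\tilde{\mu}_\Sigma &= L_F\hsighat + L_G\hsighat\,\mathrm{u} + \tfrac{1}{2}\operatorname{tr}\!\big((K_w\otimes I_N)^{\top}\tfrac{\partial^2\hsighat}{\partial \mrmx^2}(K_w\otimes I_N)\big),\\
\tilde{\sigma}_\Sigma &= \tfrac{\partial \hsighat}{\partial \mrmx}(K_w\otimes I_N),
\end{align*}
which are exactly the ensemble analogs of $\tilde{\mu}_i$ and $\tilde{\sigma}_i$ in \cref{eq:soalmostsureconv}. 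The hypothesis of the proposition is that there exists $\mathrm{u}:[0,\infty)\to\mathfrak{U}$ rendering the inequality $\tilde{\mu}_\Sigma - \tilde{\sigma}_\Sigma^2/\hsighat \ge -\hsighat^2\alpha_\star(\hsighat)$ true for every $\mrmx$ with $\hsighat(\mrmx) > 0$. With $\mrmx(0)\in\mathfrak{C}_\Sigma$ (so $\hsighat(\mrmx(0)) \ge 0$) and the ensemble dynamics in place, \cref{thm:soalmostsureconv} applies verbatim and delivers $\operatorname{Pr}(\mrmx(t)\in\mathfrak{C}_\Sigma)=1$ for all $t\ge 0$.

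The main obstacle I anticipate is justifying the $C^2$ regularity step cleanly, since a reader may worry about non-smoothness at $\ell = \pm\beta$ or at configurations where several $h_\star$ achieve the pointwise minimum simultaneously. I would address this by explicitly exhibiting the coefficient matching used to define $M_2$ in \cref{tab:params} and noting that the smoothing band of width $2\beta$ around $\ell = 0$ is exactly what removes the kink of the original $\min$-based composition. A secondary, more cosmetic, item is distinguishing $\hsighat \ge 0$ from $\minhstar \ge 0$: the bound in \cref{eq:upbound} controls how far the approximated safe set drifts from the true one, so safety is established for $\mathfrak{C}_\Sigma$ as defined via $\hsighat$ in \cref{eq:cstar}, consistent with the statement of the proposition.
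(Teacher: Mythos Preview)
Your proposal is correct and, given the proposition's hypothesis, arguably cleaner than the paper's argument. Both you and the paper define the ensemble drift and diffusion $\mrmdrift,\mrmdiff$ exactly as you wrote them and both verify $\hsighat\in C^2$. Where you diverge is in what comes next: you observe that the hypothesis already hands you the inequality $\mrmdrift - \mrmdiff^2/\hsighat \ge -\hsighat^2\alpha_\star(\hsighat)$, so \cref{thm:soalmostsureconv} applies immediately at the ensemble level. The paper instead splits into two parts: Part I argues that $\hsighat$ is a valid zeroing CBF in the sense of \cref{eq:cbfppty} (positivity near $\partial\mf{C}_\Sigma$, then Lemma~2 of \cite{ames_control_2017}), and Part II constructs the reciprocal barrier $\Bsighat = 1/\hsighat$, writes its drift $\Bmrmdrift = -\hsighat^{-2}\mrmdrift + \hsighat^{-3}\mrmdiff^2$, and algebraically shows the RCBF condition is equivalent to \cref{eq:relsigmu}, thereby recovering the almost-sure safety conclusion via Theorem~10 of \cite{so2023almost} before invoking Proposition~1 of \cite{ames_control_2017} for forward invariance. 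The paper's detour buys something your version does not: Part I makes the hypothesis non-vacuous by exhibiting why a suitable $\mathrm{u}$ should exist, and Part II essentially re-derives Corollary~11 from the RCBF machinery rather than citing it as a black box. Your route is more economical for the proposition \emph{as stated}; the paper's is more self-contained. One small point present in both treatments: \cref{thm:soalmostsureconv} requires $\mrmx(0)\in\operatorname{Int}(\mf{C}_\Sigma)$, whereas the proposition and your write-up only assume $\mrmx(0)\in\mf{C}_\Sigma$; you may want to tighten that to the open condition when finalizing.
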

}
\begin{proof}
    This proof follows from an invocation of \cref{thm:soalmostsureconv}, from $\mf{C}_\Sigma$'s definition, and from the smoothness of $\hsighat$. \hlrv{We provide the details in Appendix B}.
\end{proof}
\hlrv{
\begin{remark}
    Although the proof for Proposition 2 establishes the invariance of $\mf{C}$ for all $t \ge 0$, we note here that we require safety to be assured only for a finite interval, since the controls are computed (and applied) for $t \in \{0, 1, \ldots, \tmax-1\}$ (see the limits of the summation operator in \cref{eq:cbfqp,eq:cbfqprob}).
\end{remark}
}
\begin{algorithm}[htb]
\caption{\hlrv{Safety-Aware Collective Control via Polynomial-Smoothed NCBFs and Stochastic Quadratic Programming}}\label{alg:ncbfps}
\begin{algorithmic}[1]%
\Require $N$, $\tcont$, $\mrmx_g$, $\varepsilon_g$, Dynamics: $f, g, K_w, w_i$.%
\State Max. mission time (in number of time steps): $\tmax$.
\State $\tg \gets 0$.
\Repeat
    \State $\mrmx_0$ = \scs{StateEstimator}(). 
    \State $\mrmx_c \gets \mrmx_0$ \Comment{Store current state}.
    \State Sample ensemble disturbance realization, $\mrm{w}.$
    \For{$\tau \gets 1$ to $\tcont$}
             \State Compute $\mru_d (\tau) = ({\mrmx_c} - \mrmx_{g}) + (K_w\otimes I_N)\mrm{w}$.
             \State $\mrmx_c \gets F(\mrmx_c)+G(\mrmx_c)\mru_d(\tau)$.
    \EndFor
    \State Solve stochastic QP \cref{eq:cbfqprob} for $\{\mrustar(t)\}_{t=0}^{\tcont-1}$.
    \State Send $\mrustar(0)$ to agents' actuators.
    \For{$\tau \gets 1$ to $\tcont-2$}
        \State $\mru(\tau-1) \gets \mrustar(\tau)$.
    \EndFor
    \State $\mru(\tcont-1)\gets$ $\mru_{\text{init}}$ \Comment{Initial element of $\mru$}. %
    \If{$\tg = \tmax$}
        \State \textbf{break} \Comment{Exit loop}.
    \EndIf
    \State $\tg \gets \tg + 1$.
\Until{$||\mrmx_c - \mrmx_g|| \le \varepsilon_g$.}
\end{algorithmic}
\end{algorithm}

\section{Multi-Agent Safety Notions}
\label{sec:masf}
While different notions for safety exist, each depending on the specific task requirements at hand, we define safety in this article using the two constraints provided next (see \cref{fig:reachavoidmas} for a visual summary of the tasks under consideration).
\begin{constraints}[]
\begin{enumerate}[start=1,label={(C\arabic*)},leftmargin=*]
\setlength{\itemsep}{-2pt}
    \item[] 
    \item an inter-agent collision avoidance constraint, which requires each agent to maintain a distance of $\delta$ meters (m) from other agents, and
    \item an obstacle avoidance constraint that ensures agents do not collide with $N_o$ static circular obstacles of radius $r_{o}$ in the mission field by keeping a distance of $\delta_o > \delta$ m, with $\delta_o$ at least twice the width of each agent\footnote{We will assume that these distances are detected by range sensors fitted on each agent. Obstacles of disparate sizes can also be considered with an appropriate modification of $\delta_o$.}. 
\end{enumerate}
\end{constraints}
The aforementioned conditions can be enforced separately by the CBFs ($i$ signifies the agent index) 
\begin{align}
    \label{eq:indcbfs}
    h_i^\delta &= ||x_i - x_j||^2 - \delta^2\\
    h_i^o &= ||x_i - x_{k}^o||^2 - \delta_o^2,  
\end{align}
where ${x_{k}^o}$ is the absolute position of the $\kth$ obstacle. Using \cref{eq:compcbfs}, we select the following CBF candidate for the ensemble $h_\Sigma = h_\delta \wedge h_o$, with $h_\delta$ and $h_o$ given, respectively, as
\begingroup
    \label{eq:hdelta}
    \begin{align}
    \label{eq:deltah}
    &\bigwedge_{i=1}^N\bigwedge_{j=1+1}^Nh_i^\delta \ \text{and}\ 
    \bigwedge_{i=1}^N\bigwedge_{k=1}^{N_o} h_i^o.
    \end{align}
\endgroup
\normalsize
 The entire safe set $\mf{C}_\Sigma$ is, thus, $\mf{C}_\Sigma = \bigcap_{\star\in\{\delta, o\}}\mf{C}_\star,$ with corresponding CBF, $\min_{\star\in\{\delta, o\}}{h_\star}$. Finally, we will assume that $\mru_f$ in (A3) evolves according to the following feedback control law:  $\mru_f(t) = {\mrmx}(t) - \mrmx_{g}$. This nominal controller is thus adjusted by the safety filter provided by the CBF only as needed, to adhere to the specified safety requirements.

\section{Numerical Examples}
\label{sec:numex}
We discuss two mission scenarios in this section. In the first mission, the agents must each navigate to their respective goal locations within a planar mission space that includes a single obstacle whose position is known; in the second, multiple obstacles are introduced. 
For both examples, we consider single-integrator systems with dynamics of the form\begingroup\begin{subequations}\label{eq:sintd}
    \begin{eqnarray}
            \dot{p}_{x_i} &=& (v_{x_i}+w_{x_i})\\
            \dot{p}_{y_i} &=& (v_{y_i}+w_{y_i}), i=1, 2, \dots, N,
    \end{eqnarray}
    \end{subequations}\endgroup
with control inputs perturbed by $w_{x_i}$ and $w_{y_i}$, the $x$ and $y$ components of the normally-distributed disturbance, $w_i$. $p_{x_i}$ and $p_{y_i}$ are the $\ith$ agent's position in the plane along the abscissa and ordinate axes, respectively, with corresponding linear velocities, $v_{x_i}$ and $v_{y_i}$. It is trivial to see that \cref{eq:sintd} is equivalent to the control-affine form in \cref{eq:contaffss}, with $f = 0_2$ and $g = \mathbb{I}_2$, and where $ u_i = [v_{x_i}, v_{y_i}]^{\top}$ and $x_i = [p_{x_i}, p_{y_i}]^{\top}$ are the $\ith$ agent's control and state vectors, respectively. For convenience, we will take the disturbance terms to encode the nonlinearity in the agents' models. Furthermore, the agents are tasked with the multi-objective problem set forth in \cref{sec:numex}. \cref{tab:params} lists values of the parameters we used during simulations. Assuming identical agents, we can then write the ensemble-level dynamics for the single-integrator collective as $\dot{\mrmx} = \mathrm{u} + (K_w\otimes I_N)\mathrm{w}$, with ${\mrmx} = [p_{x_1}, p_{y_1}, p_{x_2}, p_{y_2}, \dots, p_{x_N}, p_{y_N}]^{\top}$ and $\mathrm{u} = [v_{x_1}, v_{y_1}, v_{x_2}, v_{y_2}, \dots, v_{x_N}, v_{y_N}]^{\top}$.
\begin{table}[htb]
    \centering
    \setlength\arraycolsep{2pt}
    \caption{\sc Simulation Parameters}
    \label{tab:params}
    \begin{NiceTabular}{C{1cm}C{0.8cm}C{1.5cm}C{1cm}}
    \toprule
    Parameter & Value & Parameter & Value \\
    \midrule
    \midrule
    $\beta$ & $[1, \frac{1}{2}, \frac{1}{10}]$ & $a_3$ & $-\frac{5}{4\beta^3}$ \\
    $[\delta, \delta_0]$ & $[0.14, 0.15]$ & $a_5$ & $\frac{3}{8\beta^5}$ \\
    $\delta_h$ & $0.01$ & $K_w$ & $\mathbb{I}_2$ \\
    $a_1$ & $\frac{15}{8\beta}$ & $\alpha(\gamma, h)$ & $\gamma^3 h$ \\
    $a_{2k, k\in\mathbb{Z}_{\ge 0}}$ & $0$ & $\Sigma_w$ & $0.1\cdot\mathbb{I}_d$ \\
    \bottomrule
    \end{NiceTabular}
\end{table}
In \cref{alg:ncbfps}, we provide pseudocode for implementing the proposed safety-aware control scheme. $\tg$ is the number of time steps it takes for all agents to reach positions within their respective goal sets. \hlrv{Its value is unknown; however, for finite goal reachability, we require that $\tg$ be bounded by a finite number, $\tmax \in \mathbb{Z}_+$. The \scs{StateEstimator} sub-routine returns an estimate of the current ensemble state.}
\section{Results \& Discussions}
\label{sec:resdis}
\begin{figure*}[htb]
    \centering
    \def\probFigW{.9}
    \hspace{-80pt}
    \begin{subfigure}[b]{\probFigW\columnwidth}
        \centering
    \includegraphics[trim= 0pt 0pt 0pt  0pt, clip,width=.75\columnwidth]{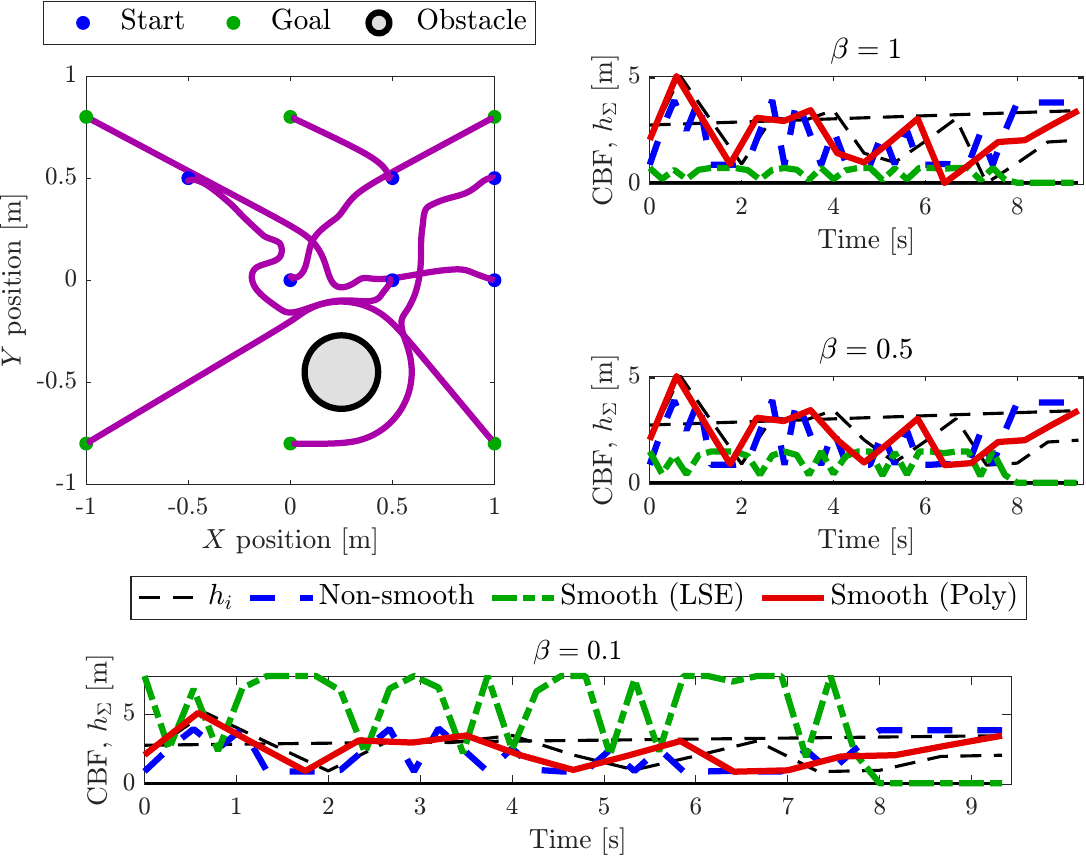}
        \caption{}
        \label{fig:betaeffectsingobs}
    \end{subfigure}
    \hspace{-50pt}
    \begin{subfigure}[b]{\columnwidth}
    \centering
    \includegraphics[width=.75\columnwidth]{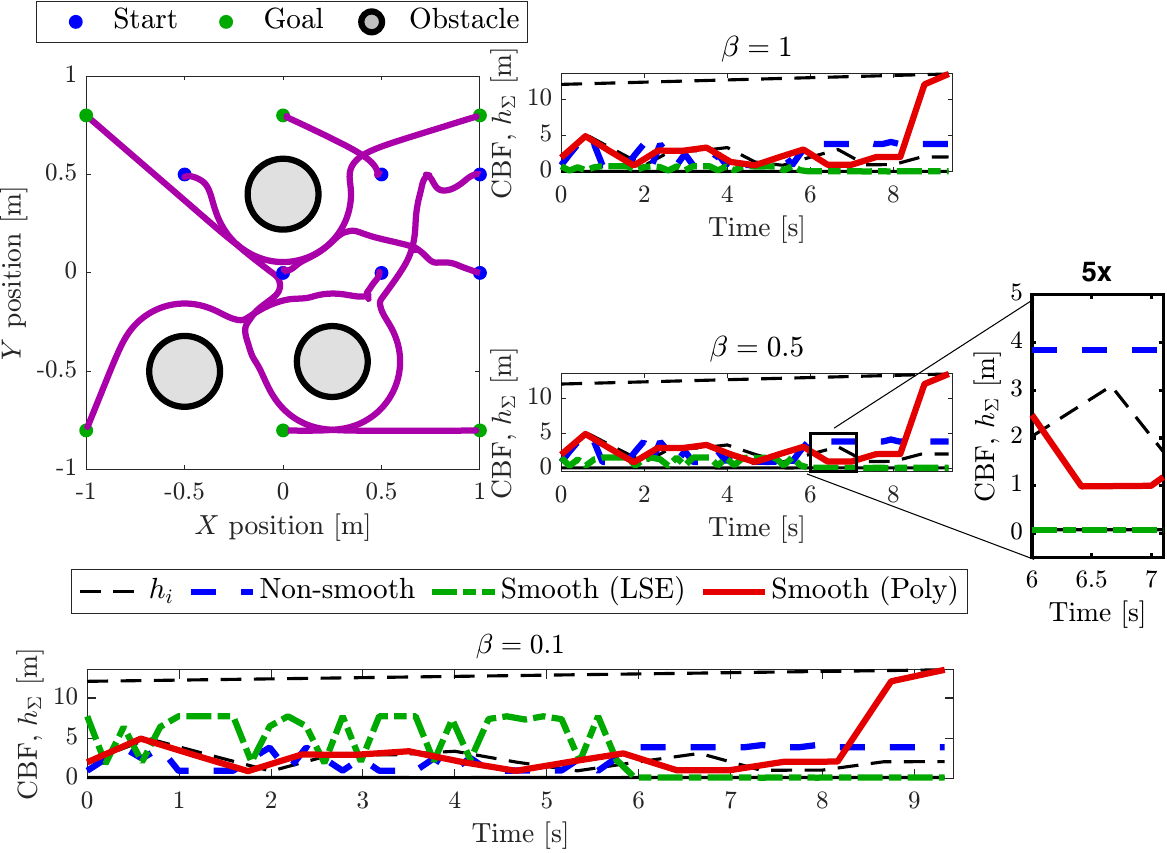}
    \caption{}
    \label{fig:betaeffectmultobs}
    \end{subfigure}
    \hspace{-35pt}
    \begin{subfigure}[b]{.6\columnwidth}
    \centering
    \includegraphics[trim= -13pt 0pt 0pt  0pt, clip,width=.9\columnwidth, scale=10]{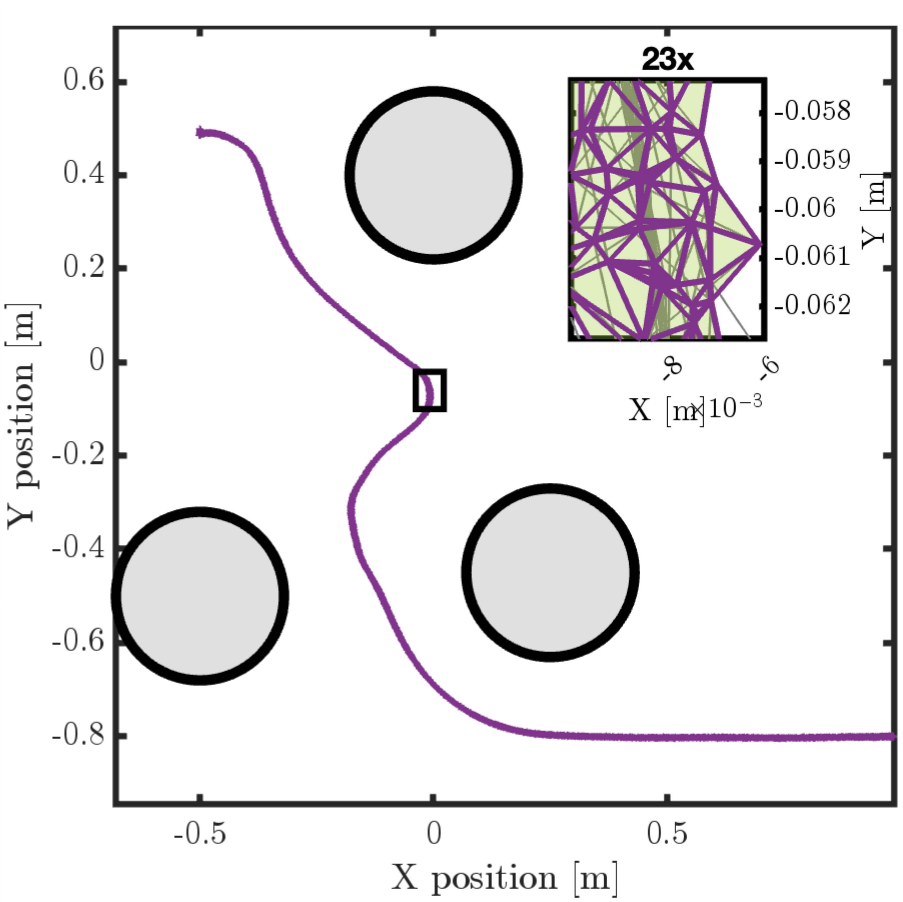}
    \caption{}
    \label{fig:trajenv}
    \end{subfigure}
    \hspace{-60pt}
    \caption{\textbf{Comparing controller reach-avoid performance}: (\subref{fig:betaeffectsingobs}) Time-evolution of agent trajectories and CBFs showing the effect of smoothing technique and smoothing parameter ($\beta$) on $h_\Sigma$, with a horizontal line at $h_\Sigma = 0$ for reference. (\subref{fig:betaeffectmultobs}) \hlrv{Results for the multiple-obstacle example. From the inset on the right, we notice that the LSE-smoothed CBF assumes a zero value rather abruptly on the interval $6\le t\le7$.} (\subref{fig:trajenv}) Envelope of 100 safe trajectories (with magnified inset) for agent 3.}
    \label{fig:betaeffect}
\end{figure*}

\subsection{Safety-Critical Control under Varied Safety Specifications}
In \cref{fig:betaeffectsingobs}, we plot the collective's trajectories and associated CBFs, with both smoothing schemes and for the single-obstacle mission. Similar plots corresponding to the multiple-obstacle example are provided in \cref{fig:betaeffectmultobs}, from where we notice comparable observations as in the single-obstacle case, with the controller driving each agent locally to the goal while ensuring that they avoid obstacles and inter-agent collisions en route. From the magnified view of the highlighted area in \cref{fig:betaeffectmultobs}, we also notice that $\hsighat$ admits a zero value for the LSE approximation scheme, which may cause the controller to revert to its nominal value, hence violating the safety constraints and effectively invalidating such CBF. $\hsighat$ is also less sensitive to the smoothing parameter for the polynomial approximation scheme than for the LSE technique. \hlrv{Finally, we notice from \cref{fig:trajenv}, that under the proposed controller, assured safety is guaranteed for all 100 experiment runs}.
\subsection{Effect of \texorpdfstring{$\beta$}{\textbeta} on Safety-Critical Control}\label{ssec:ubetaeffect}
Next, in \cref{fig:usafe}, we compare the nominal and CBF-filtered control inputs for $\beta \in \{0.1, 0.5, 1\}$, (for the multiple-obstacle setting only, due to page limits). From the plot, we observe that the polynomial smoothing technique tends to generate more conservative control actions that are adequate for the task and closely track the nominal CBF controller, modifying it only minimally, while that of the log-sum-exp function is more aggressive, exhibiting larger deviations from the nominal CBF-based controller, for unsafe control actions.
\begin{figure}[htb]
    \centering
    \hspace{-30pt}
    \def\mag{1.25}
    \def\spyw{1.5cm}
    \def\spyh{.7cm}
    \def\sz{1cm}
    \centering
    \begin{tikzpicture}[spy scope={magnification=\mag, size=\sz, width =\spyw, height=\spyh},every spy in node/.style={chamfered rectangle, fill=white, draw, dashed, red, very thick, cap=round}]
            \node {\includegraphics[width=.92\columnwidth]{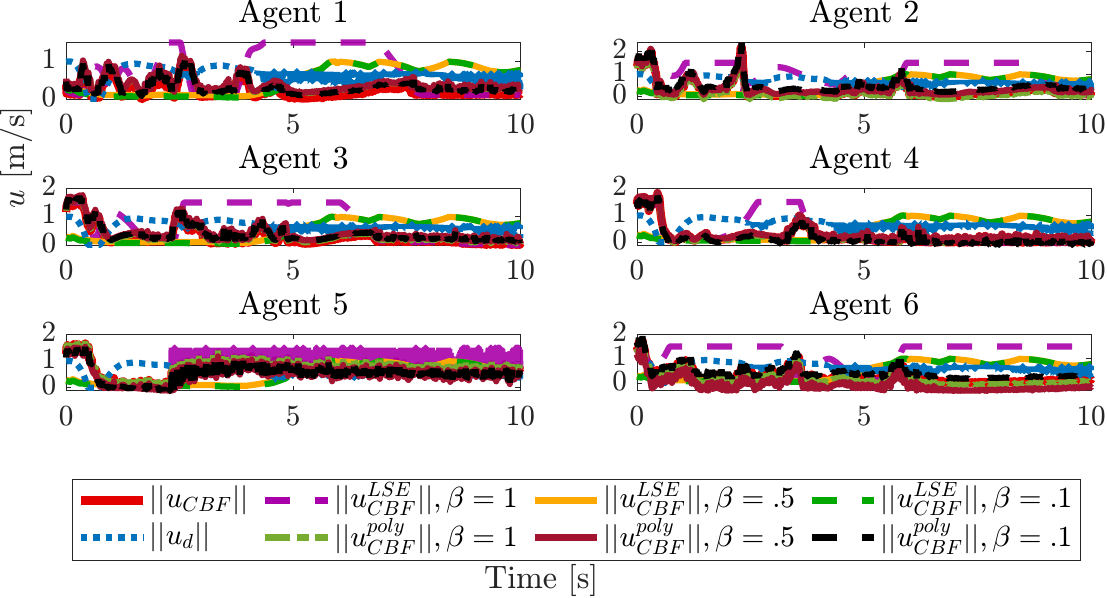}};
             \spy on (-.8,1.6) in node at (-1,1.65);
            \end{tikzpicture}
    \caption{\textbf{Synthesized control inputs}: Time-evolution of the norms of the nominal and CBF-filtered control inputs with $\beta=1, 0.5$, and $0.1$, and for the multiple-obstacle mission setting. \hlrv{A red inset highlighting the control inputs corresponding to the nominal and CBF-filtered cases is also depicted.} }
    \label{fig:usafe}
\end{figure}%
\subsection{Control Input Computation Time}\label{ssec:comptime}
Lastly, given that \cref{alg:ncbfps} relies on polynomial approximation and solving an online optimization problem, we computed average control input computation times for 100 simulation runs (see \cref{tab:timecomplx}) to roughly assess our method's performance. Although limited, the results show a direct relationship between the control horizon and average control synthesis time. Simulations were conducted using a workstation equipped with an AMD 7950x 16-core processor and 128 GB of RAM, with software programs optimized for parallel processing.
\begin{table}
\renewcommand{\arraystretch}{1.15} 
\setlength\arraycolsep{2pt}
    \centering
    \caption{\sc Computation Time (in Milliseconds)}
    \label{tab:timecomplx}
    \begin{tabular}{C{1.5cm}C{2cm}C{2cm}}
    \toprule
    $\tcont$  & \multicolumn{2}{c}{Computation Time [ms]}\\
    \cline{2-3}
    [time steps] & I & II\\
        \midrule
        \midrule
       10 & \RaggedRight$0.6 \pm 0.03$ &  \RaggedRight$ 1.05 \pm 0.2$ \\
       20 & \RaggedRight$0.647 \pm 0.035$ &  \RaggedRight$ 1.08 \pm 0.23$\\
       30 &\RaggedRight $0.66 \pm 0.037$ & \RaggedRight$1.11 \pm 0.239$\\
       \bottomrule
    \end{tabular}
\end{table}

\section{Conclusions}
\label{sec:conc}
This paper set out to illuminate an approach for safely coordinating autonomous collectives operating in environments with obstacles and under noisy control inputs. We showed that by polynomial smoothing of Boolean-composed CBFs and stochastic MPC, safe controls can be synthesized (locally) for collectives by considering the ensemble as one agent, computing controls, and propagating the resulting optimal values (component-wise) to individual agents, leading to both safety and performance guarantees. In addition, we provided evidence for the robustness of our proposed method to variations in approximation for small-enough values of the approximation parameter, with a fractional error upper bound. Our work, however, relies on the assumption of identical agent dynamics and a static environment, making controller synthesis for uncertain heterogeneous dynamical systems under varied tasking an interesting topic for future work. \hlrv{Other valid directions include an extension of the foregoing ideas to the non-Gaussian noise scenario or for more complex and dynamic mission environments. One can also consider gradient-free metaheuristic techniques that depart altogether from the QP-based framework presented here.}
\bibliographystyle{ieeetr}
\bibliography{root}%

\begin{thebibliography}{1}

\bibitem{glotfelterNonsmoothBarrierFunctions2017}
P.~Glotfelter, J.~Cortes, and M.~Egerstedt, ``Nonsmooth {Barrier} {Functions}
  {With} {Applications} to {Multi}-{Robot} {Systems},'' {\em IEEE Control
  Systems Letters}, vol.~1, pp.~310--315, Oct. 2017.

\bibitem{molnarComposingControlBarrier2023}
T.~G. Molnar and A.~D. Ames, ``Composing {Control} {Barrier} {Functions} for
  {Complex} {Safety} {Specifications},'' Sept. 2023.
\newblock arXiv:2309.06647 [cs, eess, math].

\bibitem{so2023almost}
O.~So, A.~Clark, and C.~Fan, ``Almost-sure safety guarantees of stochastic
  zero-control barrier functions do not hold,'' {\em arXiv preprint
  arXiv:2312.02430}, 2023.

\bibitem{ames_control_2019}
A.~D. Ames, S.~Coogan, M.~Egerstedt, G.~Notomista, K.~Sreenath, and P.~Tabuada,
  ``Control {Barrier} {Functions}: {Theory} and {Applications},'' in {\em 2019
  18th {European} {Control} {Conference} ({ECC})}, pp.~3420--3431, June 2019.

\bibitem{ames_control_2017}
A.~D. Ames, X.~Xu, J.~W. Grizzle, and P.~Tabuada, ``Control {Barrier}
  {Function} {Based} {Quadratic} {Programs} for {Safety} {Critical}
  {Systems},'' {\em IEEE Transactions on Automatic Control}, vol.~62,
  pp.~3861--3876, Aug. 2017.

\bibitem{ahmadi_safe_2019}
M.~Ahmadi, A.~Singletary, J.~W. Burdick, and A.~D. Ames, ``Safe {Policy}
  {Synthesis} in {Multi}-{Agent} {POMDPs} via {Discrete}-{Time} {Barrier}
  {Functions},'' in {\em 2019 {IEEE} 58th {Conference} on {Decision} and
  {Control} ({CDC})}, pp.~4797--4803, Dec. 2019.
\newblock ISSN: 2576-2370.

\bibitem{sahiner2018smoothing}
A.~Sah{\.i}ner, N.~Y{\.i}lmaz, and S.~A. Ibrahem, ``Smoothing approximations to
  non-smooth functions,'' {\em Journal of Multidisciplinary Modeling and
  Optimization}, vol.~1, no.~2, pp.~69--74, 2018.

\bibitem{mesbahStochasticModelPredictive2016}
A.~Mesbah, ``Stochastic model predictive control: An overview and perspectives
  for future research,'' {\em {IEEE} Control Systems Magazine}, vol.~36, no.~6,
  pp.~30--44, 2016.

\end{thebibliography}

\section*{Appendices}\label{sec:appx}
\subsection{Proof of Proposition 1}\label{prf:herrbound}
\begin{proof}
    Since $\hat{e} = 0$ everywhere on $\mbr$ where $\phi(\ell)$ and $\phihat$ coincide (see \cref{lem:phihat}), and $\ell \le \beta$ on $\mathcal{I}_{\beta} = [-\beta, \beta]$, using the definition of the $\liabnorm$ norm and from \cref{eq:hsigalt,eq:hsigaltfin}, we can write%
     \begingroup
        \small
        \begin{equation*}
        \begin{split}%
        &\E{||\hat{e}||_{\liabnorm}} = \E{\int_{{-\beta}}^{\beta}{|\hat{e}|d\ell}}\\
        &=\E{ \int_{-\beta}^{\beta}{{\big|-\frac{1}{2}(\ell\hat{\phi} - \ell^{\prime}) + \frac{1}{2}(\ell\phi - \ell^{\prime})\big|}d\ell}}\\
        &= \E{ \int_{-\beta}^{\beta}{{\big|\frac{1}{2}\ell(\phi - \hat{\phi})\big|}d\ell}}\\
        & \le \frac{\beta}{2}\E{\int_{-\beta}^{\beta}{| \phi -  \hat{\phi}|} d\ell } \ \text{\small {(by the monotonicity of $\E{\cdot}$)}}\\
        &= \frac{\beta}{2}\E{|| \phi - \hat{\phi}||_{\liabnorm}}= \frac{\beta}{2}\cdot{\frac{15\beta}{4}} = \frac{15\beta^2}{8},
           \end{split}
        \end{equation*}%
    \endgroup
    \normalsize
     by \cref{lem:phihat}. \hlrv{Thus, we arrive at the intended result given in Proposition 1 (i.e., $\E{||\hat{e}||_{\liabnorm}} \le \frac{15\beta^2}{8}$), which completes the proof.} 
\end{proof}%
\hlrv{
\subsection{Proof of Proposition 2}\label{prf:prop2cbfsafe}
\begin{proof}
    Let $\mrmdrift$ and $\mrmdiff$ denote the ensemble analogs of $\mu_i$ and $\sigma_i$, defined respectively as
    \begin{subequations}
        \begin{align}
        \mrmdrift &:= L_F\Hat{h}_\Sigma(\mrmx) + L_G\Hat{h}_\Sigma(\mrmx)\mathrm{u} \nonumber \\
        & \quad + \frac{1}{2}\operatorname{tr}\Big(\big((K_w\otimes I_N)^{\top}\frac{\partial^2\hsighat(\mrmx)}{\partial \mrmx^2}(K_w\otimes I_N)\big)\Big)\\
        \mrmdiff &:= \frac{\partial \hsighat(\mrmx)}{\partial \mrmx}(K_w(\mrmx)\otimes I_N).
        \end{align}
    \end{subequations}
    According to \cref{thm:soalmostsureconv}, we need to show that $\hsighat$ satisfies 
    \begin{equation}
        \label{eq:relsigmu}
        \mrmdrift-\frac{{\mrmdiff}^2}{\hsighat(\mrmx)} \geq-\hsighat^2(\mrmx) \alpha_\star(\hsighat).
    \end{equation}
    Our proof will proceed in two steps. First, we will show that $\hsighat$ is a valid zeroing CBF (in Part I of the proof). To finish the proof, in Part II, we will then construct a \textit{reciprocal} CBF (RCBF) from $\hsighat$ and prove that such an RCBF satisfies valid conditions for almost-sure safety as set forth in Theorem 10 in \cite{so2023almost}. Part I's proof follows.
    
    From \cref{eq:phihat,eq:hsigaltfin}, by the definitions of $\ell$ and $\ell^\prime$, and since $\inf_{\ell}{M_2(\ell, \beta)} = -1$ and $\sup_{\ell}{M_2(\ell, \beta)} = 1$ on $-\beta \le \ell \le \beta$, it can be shown that $\hsighat > 0$ for all $h_{*}, * \in \mathcal{I}$ such that $\forall \ \mrmx \in \partial\mf{C}_\Sigma, h_*(\mrmx)$ is positive\footnote{This statement is valid for all $h_*(\mrmx)$ cases, i.e., for $\hdiff \ge 0$ and for $\hdiff \le 0$. In all cases, $\ell^\prime > 0$, since $h_* > 0,$ for all $* \in \mathcal{I}$ by \cref{def:cbf}.}. Note that, by the sum and difference law of limits and the continuous differentiability of its constituent CBFs, $\hsighat$ is smooth and twice continuously-differentiable, i.e., in $C^2$. Thus, by Lemma 2 in \cite{ames_control_2017}, if $\hsighat(\mrmx) > 0 \ \forall \ \mrmx \in \partial\mf{C}_\Sigma$, then for each $k\in \mathbb{Z}_+$, there exists a constant $\gamma > 0$ s.t. 
    \begin{equation}
        \label{eq:lem2ames2017}
        \frac{d \hsighat(\mrmx)}{d \mrmx} \ge -\gamma \hsighat^k(\mrmx), \ \forall \ \mrmx \in \operatorname{Int}(\mf{C}).
    \end{equation}
    Setting $\alpha(\gamma, \hsighat) = \gamma \hsighat^k(\mrmx)$, we conclude that $\hsighat$ satisfies the ensemble analog of \cref{eq:cbfppty}, thus completing Part I's proof. 

    For Part II, motivated by Corollary 11 in \cite{so2023almost}, we define the RCBF, $\Bsighat := {1}/{\hsighat}$, and write its corresponding drift term, $\Bmrmdrift$, in terms of $\mrmdrift$ and $\mrmdiff$ as
    \begin{equation}
        \label{eq:rcbfinter}
        \begin{aligned}
        \Bmrmdrift =-\hsighat^{-2} \mrmdrift+\hsighat^{-3} \mrmdiff^2.
        \end{aligned}
    \end{equation}
    By the definition of an RCBF (see \cite{ames_control_2017, so2023almost}), from \cref{eq:rcbfinter}, we can write
    \begin{equation*}
    \begin{aligned}
    -\hsighat(\mrmx)^{-1}\Bmrmdrift &\leq \alpha_\star(\hsighat)\\
    \implies-\hsighat(\mrmx)^{-1}\left(-\hsighat^{-2} \mrmdrift+\hsighat^{-3} \mrmdiff^2\right)&\leq \alpha_\star(\hsighat)\\
    \label{seq:b}
    \implies\mrmdrift-\frac{{\mrmdiff}^2}{\hsighat(\mrmx)} &\geq-\hsighat^2(\mrmx) \alpha_\star(\hsighat),
    \end{aligned}
    \end{equation*}
    which is the same expression in \cref{eq:relsigmu}, where $\alpha_\star$ is a $\mck$-class function. We have now proved almost-sure safety for $\hsighat$. Invoking Proposition 1 in \cite{ames_control_2017}, since the previous argument establishes the validity of $\hsighat$ as a zeroing CBF for all time $t\ge 0$, the forward invariance of $\mf{C}_\Sigma$ follows immediately, thus completing the proof.
\end{proof}
}
\end{document}